\documentclass[11pt]{article}

\usepackage{expcommon}

\newcommand{\Argmax}{\mbox{Argmax}}
\DeclareMathOperator{\EX}{\mathbb{E}}% expected value

 { \theoremstyle{plain} \newtheorem{assumption}{Assumption}
  }

\pagenumbering{arabic}

\title{Density Estimation using Entropy Maximization for Semi-continuous Data}

\author[$1$]{Sai K. Popuri\thanks{For Correspondence: \url{saiku1@umbc.edu}. Much of the research in this paper was carried out when the first author was a graduate student at the Department of Mathematics and Statistics, University of Maryland, Baltimore County.}}
\author[$2$]{Nagaraj K. Neerchal}
\author[$3$]{Amita Mehta}
\author[$4$]{Ahmad Mousavi}
\scriptsize
\affil[$1$]{\footnotesize Walmart eCommerce, Sunnyvale, CA, USA}
\affil[$2$]{\footnotesize Department of Mathematics and Statistics, University of Maryland, Baltimore County, Baltimore, MD, USA}
\affil[$3$]{\footnotesize Joint Center for Earth Systems Technology, 5523 Research Park Dr., Baltimore, MD, USA}
\affil[$4$]{\footnotesize Institute for Mathematics and its Applications, University of Minnesota, Minneapolis, MN, USA }
\date{}

\begin{document}
\maketitle

\textsl{Revised version published in Digital Signal Processing (DOI: https://doi.org/10.1016/j.dsp.2021.103107)}

\begin{abstract}
Semi-continuous data comes from a distribution that is a mixture of the point mass at zero and a continuous distribution with support on the positive real line. A clear example is the daily rainfall data. In this paper, we present a novel algorithm to estimate the density function for semi-continuous data using the principle of maximum entropy. Unlike existing methods in the literature, our algorithm needs only the sample values of the constraint functions in the entropy maximization problem and does not need the entire sample. Using simulations, we show that the estimate of the entropy produced by our algorithm has significantly less bias compared to existing methods. An application to the daily rainfall data is provided.

\begin{keywords}
Semi-continuous data, Entropy maximization, daily precipitation.
\end{keywords} 
\end{abstract}

\section{Introduction}
\label{sec:intro}

 Random variables that feature a point mass at zero and are continuous on the positive real line are sometimes known as semi-continuous 
 random variables\footnote{In the literature, the terms `semi-continuous' and `semicontinuous' seem to have been used interchangeably. In this paper, we use the term `semi-continuous.} and their distributions are known as semi-continuous distributions. A clear example of such random variables is the daily rainfall data. 
 Figure~\ref{fig:obshist} shows the histogram of the observed daily precipitation between years 1949 and 2000 at a location in the Midwest USA. Since it did not rain every single day at this location during this period, there are many zero values in the data. For the days it rained, a positive continuous number (e.g.: mm/day) is reported. The red dot on the histogram represents the proportion of zeros in the data. Suppose $Y$ is a semi-continuous random variable with a point mass of $\gamma$ at zero and has the density functrion $g_{\mathbf{\theta}}(y)$ on the positive 
 real line, where $\mathbf{\theta}$ is the parameter governing the distribution. Then, the probability density function (pdf) of $Y$ is given by (\citet{Popuri2017})
 \begin{equation}\label{eq:semicts-tp}
 f_\mathbf{\theta}(y) = \gamma\delta(y) + (1-\gamma)\delta^*(y) g_{\mathbf{\theta}}(y),
 \end{equation}
 where $\delta(y)$ is the indicator function of $y$ taking value $1$ if $y$ is $0$ and $0$ when $y$ is positive, and $\delta^*(y)=1-\delta(y)$.
 \begin{figure}[H]
 	\centering
 	\includegraphics[width=0.6\textwidth]{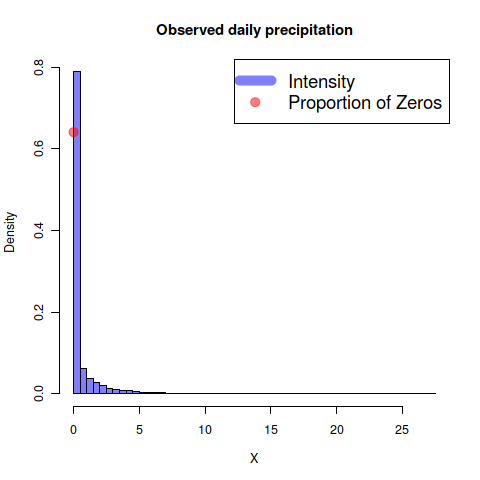}
 	\caption{Histogram of daily precipitation at a location}
 	\label{fig:obshist}
 \end{figure}
In the literature, the distribution in \eqref{eq:semicts-tp} is sometimes known as the `two-part' model owing to \citet{Cragg1971}, who has used it to analyse sales data for durable goods. The density function in \eqref{eq:semicts-tp} can be modified by restricting $\gamma$ to depend on $\mathbf{\theta}$, thereby inducing a dependency between the probability of observing a zero and the distribution of the continuous component, $g$. Let $\gamma$ be defined as a one-one function of $\mathbf{\theta}$  with the range in $(0,1)$. The pdf of $Y$ can now be written as
\begin{equation}\label{eq:semictsdensity}
f_\mathbf{\theta}(y) = \gamma(\mathbf{\theta})\delta(y) + (1-\gamma(\mathbf{\theta}))\delta^*(y) g_{\mathbf{\theta}}(y).
\end{equation}
When the density function $g$ in \eqref{eq:semictsdensity} is taken to be truncated normal with mean $\mu$ and variance $\sigma^2$, and $\gamma(\mathbf{\theta})$ is set equal to $1-\Phi(\frac{\mu}{\sigma})$, we get the density function in \eqref{eq:tobitdensity}, which is known as the Tobit model (\citet{Tobin1958}, \citet{Amemiya1985}), widely used in Econometrics,
\begin{equation}\label{eq:tobitdensity}
f_{\mathbf{\theta}}(y) = \Big(1-\Phi\Big(\frac{\mu}{\sigma}\Big)\Big)\delta(y)+\delta^*(y) \frac{1}{\sigma}\phi\Big(\frac{y-\mu}{\sigma}\Big), 
\end{equation}
where $\phi$ and $\Phi $ are the pdf and cumulative distribution function (cdf) of a standard normal random variable, respectively.

Estimation of probability density functions for semi-continuous data can be done using parametric, Bayesian, and non-parametric methods. Parametric methods consist of assuming a parametric distribution for the data and using methods like maximum likelihood estimation to estimate the unknown parameters. These methods provide efficient estimates but could be heavily model dependent with limited parametric forms. Estimation of the Tobit model in \eqref{eq:tobitdensity} is studied in detail in \citet{Amemiya1985}. Several authors have applied Bayesian methods to analyze semi-continuous data. For example, Gibbs sampling and the data augmentation method has been used for independent semi-continuous data in \citet{Chib1992}. \citet{Neelon2016} provides a detailed overview of various estimation methods, including Bayesian, applied for semi-continuous data in Biostatistics. Some of the non-parametric estimation methods include kernel density estimation, k-nearest neighbors etc. These methods are often computationally expensive and depend on choosing values for tuning parameters like bandwidth size. 

Estimation using entropy maximization is a non-parametric method that searches for a density function that maximizes the entropy value subject to certain constraints (\citet{Cover2006}). This approach is attractive in situations where we want the density function to satisfy certain properties but are otherwise non-commital with regard to other features. Estimates of density functions using this method are sometimes known as `MaxEnt' distributions. Depending on the choice of the constraint functions, this method could be computationally expensive. However, for certain standard choices of constraints, the MaxEnt distributions have closed forms. This could be useful, especially in the context of Bayesian estimation. 
%More recently, \citet{Popuri2017} has proposed an Expectation-Maximization (EM) like method for prediction of daily precipitation, wherein the zeros are imputed in the `E-step' and the Bayesian inference is made in the `M-step'.
In general, the posterior predictive distributions for semi-continuous data do not have closed forms and one needs a Markov Chain Monte Carlo (MCMC) like scheme to perform analysis (\citet{Popuri2017}). Even if they have closed forms, they are often not in recognizable classes of distributions and therefore not amenable to sampling in an automated programmatic fashion. As a result, such predictive distributions are often not suitable as inputs to systems that take distributions that are easy to sample from as input. For example, in hydrological studies, in order to assess the impact of changes in climate on water and crop yields, some researchers use a hydrological model called Soil and Water Assessment Tool (SWAT; \citet{SWAT2007}). Currently, SWAT takes point estimates of predictions of various climate variables, including daily precipitation, as inputs. However, it is conceivable that future versions of SWAT might take distributions of predictions as inputs instead. Such systems might not accept conditional distributions from a Bayesian fit and an MCMC scheme to sample predictions from posterior distributions as inputs. It would then be desirable to approximate the complicated predictive distributions to simpler forms, for example, using entropy maximization. 

Literature on entropy maximization for semi-continuous data is sparse. \citet{Politis1994} have presented a general form for the distribution estimated by entropy maximization for a mixture of discrete and continuous distributions when the MaxEnt distributions for the discrete and continuous parts are known. However, they do not provide a constructive algorithm to estimate density functions for mixture data, of which semi-continuous data is a special case. Density estimation by entropy maximization is widely used in hydrology. In hydrological studies, semi-continuous data is typically modelled by ignoring the point mass at zero and focusing exclusively on the positive continuous part (\citet{RHO2019210}). This is true for density estimation using entropy maximization as well. \citet{Papal2012} have applied various continuous distributions to model the positive continuous part of daily rainfall datasets but have set the unknown probability of observing a dry day (zero rainfall) to the proportion of zeros in the data before estimating density functions using entropy maximization for the positive continuous part alone. While this is convenient, the resulting MaxEnt density function is not the true MaxEnt solution and is often only a crude approximation, possibly with a substantial bias. In this paper we develop the theory of entropy maximization for semi-continuous data in a comprehensive way. Specifically, we do not set the point mass at zero to the proportion of zeros in the data. 
%In fact, our method does not need the entire data sample as an input. Instead, it needs only the values of the constraint functions used in the entropy maximization problem. 
We further present a novel algorithm that maximizes entropy using alternating estimation steps in an iterative fashion. We also extend the analysis in \citet{Politis1994} into an algorithm where the point mass at zero and the continuous parts are estimated in alternating steps. 
Rest of the paper is organized as follows.
%We find that approximation using the principle of maximum entropy to be a useful method for this purpose since the method is least prejudiced about the distribution beyond a set of constraints, which are provided based on the predictive density. 
Section~\ref{sec:method} provides a brief discussion of the method of entropy maximization, presents the theory suitably extended to semi-continuous data with examples, and provides an algorithm to 
estimate the density function for semi-continuous data. In section~\ref{sec:maxent-sim}, we evaluate our algorithm and compare it with two methods based on existing literature using simulations and a real-world dataset of daily rainfall at several locations in the Midwest US. We conclude with some comments on future work in section~\ref{sec:discussion}.

\section{Entropy maximization for semi-continuous data}
\label{sec:method}

The principle of maximum entropy is a method of estimation of probability distribution functions by maximizing the entropy of the distribution or the lack of information, subject to a set of constraints. Often, the constraints are on the moments of the distribution. The entropy (\citet{Cover2006}) of a distribution $p(y)$ with support on a subset $\mathcal{I}$ of the real line is defined as
\begin{definition}
	\label{def:entropy}
	$H(p) = -\int_{\mathcal{I}}p(y)\log p(y) \mu(dy)$.
\end{definition}
The density function $p$ is estimated by maximizing $H(p)$ subject to a set of constraints on functions of $Y$, often the moments of $Y$. Suppose we impose $K$ number of constraints given by
\begin{equation}
\label{eq:maxent-constraints}
\EX(h_j(y)) = \int_{\mathcal{I}}h_j(y)p(y) \mu(dy) = \alpha_j,
\end{equation}
where $\EX()$ is the Expectation operator, $j=0,\ldots,K$, $h_0(y)=1; \ \forall y \in \mathcal{I}$,  $\alpha_0=1$, and $\alpha_j \in (-\infty, \infty)$, $j=1,\ldots,K$. Note that the first constraint ensures that the desired density function integrates to $1$. The problem of estimating the density $p(y)$ using the principle of maximum entropy can be formulated as the following optimization problem:
\begin{equation}
\label{eq:maxent-ot}
\begin{aligned}
& \underset{p}{\text{maximize}}
& & H(p) \\
& \text{subject to}
& & \int_{\mathcal{I}}h_j(y)p(y) \mu(dy) = \alpha_j, \; j = 0, \ldots, K.
\end{aligned}
\end{equation}
The solution using the method of Lagrange multipliers is given by 
\begin{comment}
The Lagrangian for this problem is given by
\begin{equation}
\label{eq:maxent-lagrangian}
\mathcal{L} = -\int_{\mathcal{I}}p(y)\log p(y) \mu(dy) - \lambda_0 (\int_{\mathcal{I}}h_0(y)p(y) \mu(dy)-\alpha_0) - \ldots - \lambda_K (\int_{\mathcal{I}}h_K(y)p(y) \mu(dy)-\alpha_K).
\end{equation}
Then the solution has the necessary condition $\frac{\partial \mathcal{L}}{\partial p}=0$, from which we have
\begin{equation*}
\log p(y) = -(1 + \lambda_0 h_0(y)+\ldots+\lambda_K h_K(y))
\end{equation*}
and the resulting distribution is given by
\end{comment}

\begin{equation}
\label{eq:maxent-distr}
p(y) = \exp\Big\{-\big(1+\sum_{j=0}^{K}\lambda_j h_j(y)\big)\Big\}. 
\end{equation}
The probability density function in \eqref{eq:maxent-distr} is then substituted in the $K+1$ constraints in \eqref{eq:maxent-constraints} 
to solve for the Lagrange multipliers $\lambda_j$'s. The solution in \eqref{eq:maxent-distr} is sometimes known as 
the `MaxEnt' distribution. A well-known example of a MaxEnt solution is the normal distribution when the constraints are 
on the mean and variance for continuous distributions with support on the real line (\citet{Conrad2005}). Another example 
is the exponential distribution when the constraint is on the mean for continuous distributions with support on the positive real line (\citet{Conrad2005}). A third example is the gamma distribution when the constraints are on the mean and on the logarithm of the random variable for continuous distributions with support on the positive real line (\citet{Singh1985}). 

%%From the above examples, note that the MaxEnt distributions resulting from simple constraints for the continuous distributions are the well-known standard distributions. A natural question to ask is whether such simplifications are possible for semi-continuous distributions as well.
% and if so, how do the resulting MaxEnt distributions look like? 

Let $\mathscr{P}$ be the set of all density functions of semi-continuous random variables with support on $\mathcal{I}=[0,\infty)$, 
that is, continuous on the positive real line with a point mass at $0$. Recall that the density $p \in \mathscr{P}$ of a semi-continuous random variable $Y$ is given by 
\begin{equation}
\label{eq:maxent-semicts-density}
p(y) = \gamma\delta(y) + (1-\gamma)\delta^*(y)g(y),
\end{equation}
which is same as \eqref{eq:semicts-tp} but the parameter $\mathbf{\theta}$ is dropped for convenience in the notation.
It turns out that the entropy function for semi-continuous random variables with density function in \eqref{eq:maxent-semicts-density} is a function of the entropy of the constituent 
continuous part, as shown in the following result.
\begin{proposition} \label{prop:entropy-semicts}
	The entropy function of the semi-continuous distribution in \eqref{eq:maxent-semicts-density} is given by 
	\begin{equation}
	\label{eq:maxent-semicts-entropy}
	H(p) = -\gamma\log\gamma - (1-\gamma)\log(1-\gamma) + (1-\gamma)H(g),
	\end{equation}
	where $H(g)$ is the entropy function of $g$, the continuous component of the distribution.
\end{proposition}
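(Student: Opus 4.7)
The plan is to substitute the semi-continuous density \eqref{eq:maxent-semicts-density} directly into the entropy definition and split the integral according to the hybrid reference measure $\mu$ on $\mathcal{I}=[0,\infty)$, namely a unit point mass at $0$ plus Lebesgue measure on $(0,\infty)$. This decomposition is the only piece that needs justification; once it is in place, the rest is algebraic.

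First I would evaluate $p$ separately on the two pieces of the support. At $y=0$, the indicators give $\delta(0)=1$ and $\delta^*(0)=0$, so $p(0)=\gamma$, and the contribution to $H(p)$ from the atom is simply $-\gamma\log\gamma$. For $y>0$, $\delta(y)=0$ and $\delta^*(y)=1$, so $p(y)=(1-\gamma)g(y)$, and the continuous part contributes
\begin{equation*}
-\int_0^\infty (1-\gamma)g(y)\log\!\bigl[(1-\gamma)g(y)\bigr]\,dy.
\end{equation*}

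Next I would expand the logarithm as $\log(1-\gamma)+\log g(y)$ and split the resulting integral into two pieces. The first piece pulls out the constant $(1-\gamma)\log(1-\gamma)$ and uses $\int_0^\infty g(y)\,dy=1$ to yield $-(1-\gamma)\log(1-\gamma)$; the second piece factors out $(1-\gamma)$ and produces $(1-\gamma)H(g)$ by Definition \ref{def:entropy} applied to $g$. Adding these to the atomic contribution gives the claimed identity.

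The main (mild) obstacle is really a bookkeeping one: one must be comfortable writing $\int_{\mathcal{I}}p(y)\log p(y)\,\mu(dy)$ as the sum of an atomic term at $y=0$ (where the density against counting measure is $\gamma$, not $\gamma\,\delta(0)$) and a Lebesgue integral over $(0,\infty)$ against the density $(1-\gamma)g(y)$. Once this convention for the reference measure $\mu$ is fixed, no analytic subtleties remain and the identity follows from a short calculation.
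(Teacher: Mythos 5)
Your proposal is correct and follows essentially the same route as the paper: both substitute the density into the entropy definition and decompose the integral over the dominating measure (point mass at $0$ plus Lebesgue measure on $(0,\infty)$), with the atom contributing $-\gamma\log\gamma$ and the continuous part contributing $-(1-\gamma)\log(1-\gamma)+(1-\gamma)H(g)$ after expanding the logarithm. The only cosmetic difference is that the paper first rewrites $p(y)$ in the product form $\gamma^{\delta(y)}(1-\gamma)^{\delta^*(y)}(g(y))^{\delta^*(y)}$ before taking logs, whereas you evaluate $p$ on each piece of the support directly; the computation is the same.
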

\begin{proof}
	It is convenient to write the pdf in \eqref{eq:maxent-semicts-density} as
	\begin{equation}
	\label{eq:maxent-semicts-density2}
	p(y) = \gamma^{\delta(y)}(1-\gamma)^{\delta^*(y)}\big(g(y)\big)^{\delta^*(y)}.
	\end{equation}
	Taking $\log$ of the density function, we get 
	\begin{equation}
	 \label{eq:maxent-semicts-density3}
	 \log p(y) = \delta(y)\log \gamma+\delta^*(y)\log (1-\gamma)+\delta^*(y)\log\big(g(y)\big).
 	\end{equation}
 	Substituting the density function in \eqref{eq:maxent-semicts-density2} and its $\log$ in \eqref{eq:maxent-semicts-density3} in the definition of entropy in \eqref{def:entropy} and noting that the dominating measure for semi-continuous random variables is the sum of the counting and Lebesgue measures, and that their support is $\mathcal{I} = \{0\}\cup (0,\infty)$, we get 
	\begin{equation}
	\label{eq:entropy-semicts}
	\begin{aligned}
	H(p) &= -\int_{\mathcal{I}}p(y)\log p(y) \mu(dy) \\
	&=
	\begin{aligned}[t]
	& -\Big(\int_{\mathcal{I}}\delta(y)\gamma^{\delta(y)}(1-\gamma)^{\delta^*(y)}\big(g(y)\big)^{\delta^*(y)}\log\gamma \mu(dy) \\ 
	& + \int_{\mathcal{I}}\delta^*(y)\gamma^{\delta(y)}(1-\gamma)^{\delta^*(y)}\big(g(y)\big)^{\delta^*(y)}\log (1-\gamma) \mu(dy) \\
	& + \int_{\mathcal{I}}\delta^*(y)\gamma^{\delta(y)}(1-\gamma)^{\delta^*(y)}\big(g(y)\big)^{\delta^*(y)}\log g_{\mathbf{\theta}}(y) \mu(dy)\Big)
	\end{aligned} \\
	&= -\Big(\gamma\log\gamma + (1-\gamma)\log (1-\gamma) + \int_0^{\infty}(1-\gamma)g(y)\log g_{\mathbf{\theta}}(y)dy\Big) \\
	&= -\gamma\log\gamma - (1-\gamma)\log (1-\gamma) + (1-\gamma)H(g).
	\end{aligned}
	\end{equation}
\end{proof}
Density function estimation using entropy maximization for semi-continuous data involves maximizing the entropy function in \eqref{eq:entropy-semicts}, similar to the problem in \eqref{eq:maxent-ot}. Consider the following optimization problem: maximize the entropy $H(p)$ in \eqref{eq:entropy-semicts} over all probability density functions $p\in \mathscr{P}$ satisfying
\begin{equation} 
\label{eq:semicts-maxent-constr}
\int_{\mathcal{I}}h_j(y)p(y)\mu(dy) =  \alpha_j, j=0,\ldots,K,
\end{equation}
where $h_0(y) = 1; \ \forall y \in \mathcal{I}$, $\alpha_0 = 1$, and 
$\alpha_j \in (0, \infty)$, $j=1,\ldots,K$. The corresponding Lagrange function is given by 
\begin{equation}
\label{eq:semictsmaxent-lagrangian}
\begin{aligned}
\mathcal{L} &= H(p) - \lambda_0 \Big(\int_{\mathcal{I}}h_0(y)p(y) \mu(dy)-\alpha_0\Big) - \ldots - \lambda_K \Big(\int_{\mathcal{I}}h_K(y)p(y) \mu(dy)-\alpha_K\Big) \\
&= -\gamma\log\gamma - (1-\gamma)\log (1-\gamma) + (1-\gamma)H(g) - \sum_{j=0}^{K}\lambda_j\Big\{\gamma h_j(0) + (1-\gamma)\int_0^{\infty}h_j(y)g(y)dy - \alpha_j\Big\}.
\end{aligned}
\end{equation}
Setting the derivatives of $\mathcal{L}$ with respect to $\gamma$ and $g$ to $0$ and solving the Lagrange equations, we obtain the following maximizing density function
\begin{equation}
\label{eq:maxent-semicts-density1}
p(y) = \gamma^*\delta(y) + (1-\gamma^*)\delta^*(y)g^*(y),
\end{equation}
where
\begin{equation}
    \label{eq:semicts-maxent-soln-g}
    g^*(y) = \exp\Big\{-\big(1+\sum_{j=0}^{K}\lambda_j h_j(y)\big)\Big\},
\end{equation}
and 
\begin{equation}
    \label{eq:semicts-maxent-soln-gamma}
    \gamma^* = \frac{1}{1+\text{exp}\{H(g^*)+\sum_{j=0}^{K}\lambda_j h_j(0)-\sum_{j=0}^{K}\lambda_j\int_{0}^{\infty}h_j(y)g^*(y)dy\}}.
\end{equation}
The Lagrange multipliers $\lambda$'s are obtained by substituting the density in \eqref{eq:maxent-semicts-density1} in the constraints in \eqref{eq:semicts-maxent-constr} and solving for $\lambda$'s.

Some of the results for standard choices of constraints for continuous data extend to the semi-continuous case.
\begin{example}\label{ex:maxent-exp}(Semi-continuous Exponential distribution)\newline
    We know that 
    \begin{equation}
	\label{eq:exp-ex}
	g^*(y) = \frac{1}{\alpha_1} \text{exp}\Big(-\frac{y}{\alpha_1}\Big)
	\end{equation}
    is the distribution that maximizes the entropy in the class of density functions for continuous random variables with support on the positive real line and the constraint on the first moment with $\int_{0}^{\infty}yg(y)dy = \alpha_1$. In particular, suppose the data generating process (DGP) is assumed to be the exponential distribution $g(y) = \theta\text{exp}(-\theta y)$ and $\alpha_1$ is set to the expected value $1/\theta$, then the MaxEnt distribution is $g(y)$, the DGP that we started with.
    
    Extending to the semi-continuous case, we will see that when the constraint is on the first moment, the MaxEnt distribution is a semi-continuous density function where the continuous component is the exponential distribution. However, the MaxEnt distribution when the DGP is a two-part exponential distribution (\eqref{eq:semicts-tp}, where $g_{\theta}(y)$ is exponential), is not the DGP, unlike the continuous case.
    
    The constraint on the first moment is given by 
	\begin{equation}
	\label{eq:exp-example-constr}
	\int_{\mathcal{I}}h_1(y)p(y)\mu(dy) = \alpha_1,
	\end{equation} 
	where $h_1(y) = y; \  \forall y \in \mathcal{I}$.
	Then the MaxEnt distribution $p$ in \eqref{eq:maxent-semicts-density1} is given by 
	\begin{equation}
	\label{eq:exp-example1}
		p(y) = \gamma^*\delta(y) + (1-\gamma^*)\delta^*(y)g^*(y),
	\end{equation}
	where 
	\begin{equation*}
		g^*(y) = \text{exp}(-1-\lambda_0-\lambda_1 y)
	\end{equation*}
	and
	\begin{equation}
	\label{eq:exp-example2}
		\gamma^* = \frac{1}{1+\text{exp}\big\{H(g^*) - \lambda_1\int_{0}^{\infty}yg^*(y)dy\big\}}.
	\end{equation}
	Substituting $p(y)$ from \eqref{eq:exp-example1} into the first constraint $\int_{\mathcal{I}}p(y)\mu(dy) = 1$ and the second constraint in \eqref{eq:exp-example-constr}, and solving for $\lambda_0$ and $\lambda_1$, we get
	\begin{equation}
	\label{eq:exp-example3}
		g^*(y) = \frac{1-\gamma^*}{\alpha_1}\text{exp}\Big(-\frac{1-\gamma^*}{\alpha_1}y\Big).
	\end{equation}
	Substituting $g^*$ from \eqref{eq:exp-example3} in \eqref{eq:exp-example2} simplifies to the quadratic equation in $x$
	\begin{equation}
	\label{eq:exp-example4}
		x^2 - x(2+\alpha_1) + 1 = 0,
	\end{equation}
whose solution is $\gamma^*$. Since $\gamma^* \in (0,1)$, the only feasible solution is $\frac{(2+\alpha_1) - \sqrt{(2+\alpha_1)^2 - 4}}{2}$. 
Suppose the DGP is assumed to be the following two-part exponential distribution
\begin{equation}
\label{eq:exp-example-dgp}
    p(y) = \gamma\delta(y) + (1-\gamma)\delta^*(y)\theta\text{exp}(-\theta y),
\end{equation}
where $\gamma \in (0, 1)$ and $\theta > 0$ are known. Further suppose we set $\alpha_1$ to $E(Y) = (1-\gamma)\frac{1}{\theta}$. By the analogy in the case of the continuous data with exponential distribution as the DGP, one would expect the MaxEnt distribution to be \eqref{eq:exp-example-dgp}. However, the MaxEnt distribution is given by 
\begin{equation*}
    p(y) = \gamma^*\delta(y) + (1-\gamma^*)\delta^*(y)g^*(y),
\end{equation*}
where 
\begin{equation*}
	g^*(y) = \frac{\theta(1-\gamma^*)}{(1-\gamma)}\text{exp}\Big(\frac{-y\theta(1-\gamma^*)}{(1-\gamma)}\Big).
\end{equation*}
and
$\gamma^*$ is the solution to the quadratic equation in variable $x$
\begin{equation*}
	\theta x^2 - x(1-\gamma+2\theta) + \theta = 0,
\end{equation*}
where the only feasible solution is $\frac{(1-\gamma+2\theta)-\sqrt{(1-\gamma+2\theta)^2 - 4\theta^2}}{2\theta}$. 
Notice that even though the MaxEnt solution is not the DGP two-part exponential distribution, it still belongs to the class of two-part exponential distributions.
\end{example}
	
\begin{example}\label{ex:maxent-gamma}(Semi-continuous Gamma distribution)\newline
    The two parameter Gamma distribution maximizes the entropy in the class of density functions with support on the positive real line and constraints on the first moment and the logarithm of the variable (\citet{Singh1985}). Suppose the DGP is given by the following gamma distribution
    \begin{equation}
        \label{eq:exp-gamma-dgp}
        g(y) = \frac{1}{\Gamma(\kappa)\theta^\kappa}y^{\kappa-1}e^{-y/\theta},
    \end{equation}
    where $\kappa > 0$ and $\theta > 0$ are the shape and scale parameters, respectively. Further suppose that in the entropy maximization problem,
    \begin{align*}
        \begin{split}
             h_1(y) &= y ,
            \\
             \alpha_1 &= \kappa\theta = \EX(Y)
        \end{split}
    \end{align*}
    and 
    \begin{align*}
        \begin{split}
             h_2(y) &= \log(y) ,
            \\
             \alpha_2 &= \log(\theta) + \psi(\kappa) = \EX(\log(Y)).
        \end{split}
    \end{align*}
    Then the MaxEnt distribution is the DGP in \eqref{eq:exp-gamma-dgp}.
    
    Consider the semi-continuous case with support on $[0, \infty]$ and suppose we choose the following $h_j$ functions
    \begin{align*}
        \begin{split}
             h_1(y) &= y ,
            \\
             h_2(y) &= \begin{cases}
                0 &\text{$y = 0$}\\
                \log(y) &\text{$y > 0$}.
                \end{cases}
        \end{split}
    \end{align*}
    The corresponding $\alpha_1$ and $\alpha_2$ are typically empirical averages of the above functions based on a sample and therefore, are known. 
    Then the MaxEnt distribution $p$ in \eqref{eq:maxent-semicts-density1} is given by 
	\begin{equation}
	\label{eq:gamma-example1}
		p(y) = \gamma^*\delta(y) + (1-\gamma^*)\delta^*(y)g^*(y),
	\end{equation}
	where 
	\begin{equation*}
	\label{eq:gamma-example2}
		g^*(y) = \text{exp}(-1-\lambda_0-\lambda_1 y - \lambda_2 I(y>0)\log(y))
	\end{equation*}
	and
	\begin{equation}
	\label{eq:gamma-example3}
		\gamma^* = \frac{1}{1+\text{exp}\big\{H(g^*) - \lambda_1\int_{0}^{\infty}yg^*(y)dy - \lambda_2\int_{0}^{\infty}\log(y)g^*(y)dy\big\}}.
	\end{equation}
	Following the algebra in \citet{Singh1985}, we substitute the density function in \eqref{eq:gamma-example1} in the constraints and solve for the Lagrange multipliers.
	
	First, we equate the integral of the density function in \eqref{eq:gamma-example1} over the support of the semi-continuous random variable to $1$ to get
	\begin{flalign}
	\begin{aligned}
	\label{eq:ex-gamma-constr1}
        & \int_{\mathcal{I}}p(y)\mu(dy) =  1 \\ 
        \implies & \gamma^* + (1-\gamma^*)\int_{0}^{\infty}\text{exp}\big\{-1-\lambda_0-\lambda_1y-\lambda_2\log(y)\big\}dy = 1 \\
        \implies & \int_{0}^{\infty}\exp\{-\lambda_1y-\lambda_2\log(y)\}dy = \exp\{1+\lambda_0\} \\
        \implies & \lambda_0 = \log(\Gamma(1-\lambda_2)) - (1-\lambda_2)\log(\lambda_1)-1.
    \end{aligned}
    \end{flalign}
    The second constraint yields
    \begin{flalign}
	\begin{aligned}
	\label{eq:ex-gamma-constr2}
        & \int_{\mathcal{I}}yp(y)\mu(dy) =  \alpha_1 \\ 
        \implies & (1-\gamma^*)\int_{0}^{\infty}y\text{exp}\big\{-1-\lambda_0-\lambda_1y-\lambda_2\log(y)\big\}dy = \alpha_1 \\
        \implies & \int_{0}^{\infty}y\exp\{-\lambda_1y-\lambda_2\log(y)\}dy = \frac{\alpha_1\exp\{1+\lambda_0\}}{1-\gamma^*},
    \end{aligned}
    \end{flalign}
    and the third constraint gives
    \begin{flalign}
	\begin{aligned}
	\label{eq:ex-gamma-constr3}
        & \int_{\mathcal{I}}I(y>0)\log(y)p(y)\mu(dy) =  \alpha_2 \\
        \implies & \int_{0}^{\infty}\log(y)\exp\{-\lambda_1y-\lambda_2\log(y)\}dy = \frac{\alpha_2\exp\{1+\lambda_0\}}{1-\gamma^*}.
    \end{aligned}
    \end{flalign}
    Differentiating $\exp\{1+\lambda_0\}$ in \eqref{eq:ex-gamma-constr1} with respect to $\lambda_1$, we get
    \begin{flalign}
	\begin{aligned}
	\label{eq:ex-gamma-eq1}
        & 
        \begin{split}
            \exp\{1+\lambda_0\}\frac{\partial\lambda_0}{\partial\lambda_1} & = -\int_{0}^{\infty}y\exp\{-\lambda_1y-\lambda_2\log(y)\}dy \\
            & = -\alpha_1\frac{\exp\{1+\lambda_0\}}{1-\gamma^*} \text{(from \eqref{eq:ex-gamma-constr2})}
        \end{split}
        \\
        \implies & \frac{\partial\lambda_0}{\partial\lambda_1} = -\frac{\alpha_1}{1-\gamma^*}.
    \end{aligned}
    \end{flalign}
    Differentiating $\lambda_0$ in \eqref{eq:ex-gamma-constr1} with respect to $\lambda_1$, we get
    \begin{equation}
        \label{eq:ex-gamma-eq2}
        \frac{\partial\lambda_0}{\partial\lambda_1} = -\frac{1-\lambda_2}{\lambda_1}.
    \end{equation}
    From \eqref{eq:ex-gamma-eq1} and \eqref{eq:ex-gamma-eq2}, we get
    \begin{equation}
        \label{eq:ex-gamma-eq3}
        \frac{1-\lambda_2}{\lambda_1} = \frac{\alpha_1}{1-\gamma^*}.
    \end{equation}
    Differentiating $\exp\{1+\lambda_0\}$ in \eqref{eq:ex-gamma-constr1} with respect to $\lambda_2$, we get
    \begin{flalign}
	\begin{aligned}
	\label{eq:ex-gamma-eq4}
        & 
        \begin{split}
            \exp\{1+\lambda_0\}\frac{\partial\lambda_0}{\partial\lambda_2} & = -\int_{0}^{\infty}\log(y)\exp\{-\lambda_1y-\lambda_2\log(y)\}dy \\
            & = -\alpha_2\frac{\exp\{1+\lambda_0\}}{1-\gamma^*} \text{(from \eqref{eq:ex-gamma-constr3})}
        \end{split}
        \\
        \implies & \frac{\partial\lambda_0}{\partial\lambda_2} = -\frac{\alpha_2}{1-\gamma^*}.
    \end{aligned}
    \end{flalign}
    Differentiating $\lambda_0$ in \eqref{eq:ex-gamma-constr1} with respect to $\lambda_2$, we get
    \begin{equation}
        \label{eq:ex-gamma-eq5}
        \frac{\partial\lambda_0}{\partial\lambda_2} = \log(\lambda_1) - \psi(1-\lambda_2).
    \end{equation}
    From \eqref{eq:ex-gamma-eq4} and \eqref{eq:ex-gamma-eq5}, we get
    \begin{equation}
        \label{eq:ex-gamma-eq6}
        \psi(1-\lambda_2) - \log(\lambda_1) = \frac{\alpha_2}{1-\gamma^*}.
    \end{equation}
    Substituting $\lambda_0$ from \eqref{eq:ex-gamma-constr1} in \eqref{eq:gamma-example1}, we get 
    \begin{equation} \label{eq:gamma-soln}
        \begin{split}
            p(y) & = \gamma^*\delta(y) + (1-\gamma^*)\delta^*(y)\exp\{-\log(\Gamma(1-\lambda_2))+(1-\lambda_2)\log(\lambda_1)-\lambda_1y-\lambda_2I(y>0)\log(y)\} \\
                & = \gamma^*\delta(y) + (1-\gamma^*)\delta^*(y)\frac{\lambda_1^{1-\lambda_2}}{\Gamma(1-\lambda_2)}\exp\{-\lambda_1y\}y^{-\lambda_2I(y>0)} \\
                & = \gamma^*\delta(y) + (1-\gamma^*)\delta^*(y)\frac{1}{\Gamma(a)b^a}\exp\{-y/b\}y^{a-1},
        \end{split}
    \end{equation}
    where $a = 1-\lambda_2$, $b=1/\lambda_1$, and 
    \begin{equation}
        \label{eq:gamma-gamma-soln}
        \begin{split}
        \gamma^* & = \frac{1}{1+\exp\big\{H(g^*)-1+\lambda_2+\lambda_2\log(\lambda_1)-\lambda_2\psi(1-\lambda_2)\big\}} \\
            & = \frac{1}{1+\exp\big\{1-\lambda_2-\log(\lambda_1)+\log\Gamma(1-\lambda_2)+\lambda_2\psi(1-\lambda_2)-1+\lambda_2+\lambda_2\log(\lambda_1)-\lambda_2\psi(1-\lambda_2)\big\}} \\
            & = \frac{1}{1+\exp\{-(1-\lambda_2)\log(\lambda_1) + \log\Gamma(1-\lambda_2)\}} \\ 
            & = \frac{\lambda_1^{1-\lambda_2}}{\lambda_1^{1-\lambda_2}+\Gamma(1-\lambda_2)}.
        \end{split}
    \end{equation}
    The unknown Lagrange multipliers are found by solving \eqref{eq:ex-gamma-eq3}, \eqref{eq:ex-gamma-eq6}, and \eqref{eq:gamma-gamma-soln}. Notice that unlike the semi-continuous exponential distribution in Example~\ref{ex:maxent-exp}, there is no closed form solution here. Similar to the semi-continuous exponential distribution in Example \ref{ex:maxent-exp}, suppose DGP is a semi-continuous gamma distribution with known parameter values and suppose $\alpha_1$ and $\alpha_2$ are set to the respective true expected values. Because the closed form solution does not exist here, the DGP cannot be recovered as the MaxEnt solution. However, notice that the MaxEnt distribution in \eqref{eq:gamma-soln} belongs to the class of semi-continuous gamma distributions.
\end{example}
\begin{comment}
\begin{remark}
	While the entropy maximization method yields standard distributions for certain constraints as we saw in Corollaries~\ref{ex:maxent-exp} and \ref{ex:maxent-gamma}, such simplifications are not always possible for more complicated constraints. As a result, if it is desired to estimate the complicated posterior predictive distributions with standard distributions that allow sampling using off-the-shelf software, then we might be restricted on the choice of constraints. 
\end{remark}
\end{comment}

\subsection{An alternating entropy maximization algorithm for semi-continuous data}
\label{sec:algo}

Suppose that the functions $h_j(y)$'s are known and differentiable on $S$ where $[0,\infty)\subseteq S$, and scalars $\alpha_j$, for $j=1,\dots, K,$ are given.  Further, assume that there exists  $p\in \mathscr{P}$ such that it satisfies the following constraints:
\begin{equation} \label{eq:thm1-constr}
	\int_{\mathcal{I}}h_j(y)p(y)\mu(dy) = \alpha_j, \quad \text{ for }\quad  j=0,\ldots,K \quad \text{ with } \quad h_0 = \alpha_0 = 1.
	\end{equation}
Then, our goal is to identify the properties of the distribution maximizing the entropy of a semi-continuous random variable with density in $\mathscr{P}$ subject to the above constraints. \begin{comment}
\begin{equation*}
	p^*(y) = \gamma^*\delta(y) + (1-\gamma^*)(1-\delta(y))g^*(y),
	\end{equation*}
where $g^*$ maximizes the entropy of a continuous random variable with support $(0,\infty)$ subject to the constraints:
	\begin{equation} 	\label{eq:thm1-constr-g^*}
	 \int_{0}^{\infty}h_j(z)g^*(z)dz = \frac{\mu_j-\gamma^*h_j(0)}{1-\gamma^*}, \quad  j=0,\ldots,K
	\end{equation}
and $\gamma^* = \frac{1}{1+\exp(H(g^*))}$ in which $H(g^*)$ is the maximized entropy value of $g^*$.
\end{comment}
The constraints in \eqref{eq:thm1-constr} are expanded as 
\begin{equation*}
	\int_{\mathcal{I}}h_j(y)p(y)\mu(dy) = \gamma h_j(0) + (1-\gamma)\int_{0}^{\infty}h_j(y)g(y)dy = \alpha_j,
\end{equation*}
which give us the following constraints:
\begin{equation} \label{eq:thm1-constr1}
	\int_{0}^{\infty}h_j(y)g(y)dy = \frac{\alpha_j - \gamma h_j(0)}{1-\gamma}, \quad 	j=0,\ldots,K.
\end{equation}
In particular, the function $g^*$ must satisfy 
	\begin{equation} 	\label{eq:thm1-constr-g^*}
	 \int_{0}^{\infty}h_j(z)g^*(z)dz = \frac{\alpha_j-\gamma^*h_j(0)}{1-\gamma^*}, \quad  j=0,\ldots,K
	\end{equation}
Note that for the  first constraint, the quantity $\int_{0}^{\infty}h_0(y)g(y)dy$ evaluates to $1$ since $\alpha_0 = h_0 = 1$.
\newline
For capturing the properties of $\gamma^*$ and  $g^*$, we bring three definitions as follows: 
\begin{equation*}
H(\gamma, g):=-\gamma\log\gamma - (1-\gamma)\log (1-\gamma) + (1-\gamma)H(g),
\end{equation*}
so that  $H(p)=H(\gamma,g)$ based on (\ref{eq:entropy-semicts}), and 
\begin{equation} \label{def:g-mid-gamma}
\left\{ \begin{array}{ll}
\mathscr{G}:=\{g \,|\, g \text{ is the  density function of a continuous random variable supported on } (0,\infty)\}; \\
\mathscr{G}\mid \gamma:=\{g \in \mathscr{G}\,|\, g \text{ satisfies }  (\ref{eq:thm1-constr1}) \text{ for } \gamma\}.\end{array} \right.
\end{equation}
We now divide the optimization problem into two interdependent sub-problems, as given in the following result. This decomposition will allow our algorithm to estimate the density function in an alternating iterative procedure.
\begin{proposition}\label{thm:maxent-obj-decomposition}
The problem of maximization of $H(p)$ in \eqref{eq:maxent-semicts-entropy} can be solved using the following equivalent problem:
\begin{equation*}
    \max_{p\in \mathscr{P}} \, \{H(p) \,|\, p \text{ satisfies } (\ref{eq:thm1-constr})\} = \max_{\gamma\in (0,1)}\big\{-\gamma\log\gamma - (1-\gamma)\log(1-\gamma)+(1-\gamma)\max_{g\in \mathscr{G}\mid \gamma} H(g)\big\}.
\end{equation*}
\end{proposition}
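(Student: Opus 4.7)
The plan is to reduce the joint maximization over $p\in \mathscr{P}$ to a nested maximization by exploiting the bijective correspondence between a density $p\in \mathscr{P}$ and its parametrization as a pair $(\gamma,g)\in [0,1]\times \mathscr{G}$ via equation~(\ref{eq:maxent-semicts-density}), and then invoking Proposition~\ref{prop:entropy-semicts} to rewrite the objective as the separable form $H(\gamma,g)$.

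First, I would argue that feasibility of $p$ with respect to (\ref{eq:thm1-constr}) translates, under the splitting $\int_{\mathcal I}h_j(y)p(y)\mu(dy)=\gamma h_j(0)+(1-\gamma)\int_0^\infty h_j(y)g(y)dy$, into exactly the family of constraints (\ref{eq:thm1-constr1}); in the notation of~(\ref{def:g-mid-gamma}) this is precisely the statement $g\in \mathscr{G}\mid \gamma$. Since the problem is nontrivial only when $\mathscr{G}\mid\gamma\neq \emptyset$ (for $\gamma\in\{0,1\}$ the mass is entirely on the atom or entirely on the continuous part, and the supplied constraints admit no further degree of freedom) I would restrict the outer optimization to $\gamma\in (0,1)$ as in the statement.

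Second, applying Proposition~\ref{prop:entropy-semicts} the objective becomes
\begin{equation*}
H(p)=H(\gamma,g)=-\gamma\log\gamma-(1-\gamma)\log(1-\gamma)+(1-\gamma)H(g),
\end{equation*}
so the feasible set for $p$ is in bijection with $\{(\gamma,g):\gamma\in(0,1),\,g\in \mathscr{G}\mid\gamma\}$. The iterated maximum identity $\max_{(\gamma,g)}F(\gamma,g)=\max_{\gamma}\max_{g\in\mathscr{G}\mid\gamma}F(\gamma,g)$ then yields
\begin{equation*}
\max_{p\in\mathscr{P}}H(p)=\max_{\gamma\in(0,1)}\max_{g\in\mathscr{G}\mid\gamma}\bigl\{-\gamma\log\gamma-(1-\gamma)\log(1-\gamma)+(1-\gamma)H(g)\bigr\}.
\end{equation*}
Because the first two summands do not depend on $g$ and the coefficient $(1-\gamma)>0$ on $(0,1)$, the inner maximization can be pulled through, giving precisely the right-hand side of the claimed identity.

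The only subtle point, and hence where I would spend the most care, is justifying the swap of ``max'' with the multiplication by $(1-\gamma)$ and the exclusion of the boundary $\gamma\in\{0,1\}$: strict positivity of $1-\gamma$ and the fact that $-\gamma\log\gamma-(1-\gamma)\log(1-\gamma)$ extends continuously to the boundary with value $0$ make this rigorous, so any $p$ that is feasible must correspond to $\gamma$ in the interior whenever the continuous component is genuinely required by the constraints. The remainder of the argument is a routine rewriting, and no existence or uniqueness of the optimizer is needed for the equality of optima asserted by the proposition.
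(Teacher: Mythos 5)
Your proposal is correct and follows essentially the same route as the paper: parametrize $p$ by the pair $(\gamma,g)$, translate the constraints (\ref{eq:thm1-constr}) into (\ref{eq:thm1-constr1}) so that feasibility becomes $g\in\mathscr{G}\mid\gamma$, apply Proposition~\ref{prop:entropy-semicts} to the objective, and use the iterated-maximum identity before pulling the $g$-independent terms out of the inner maximization. The only difference is cosmetic --- the paper proves the iterated-maximum identity explicitly via the auxiliary function $h(\gamma)$ in (\ref{def:little-h-gamma}) and a two-sided inequality (using the standing assumption that a feasible $p$ exists so that $h(\bar\gamma)\neq-\infty$), whereas you invoke it as a standard fact, which is acceptable.
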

\begin{proof}
We clearly have
\begin{equation*} 
\max_{p\in \mathscr{P}} \, \{H(p) \,|\, p \text{ satisfies } (\ref{eq:thm1-constr})\}  = \max_{\gamma\in (0,1), \, g\in \mathscr{G}} \, \{H(\gamma,g)  \,|\,  \gamma, g \text{ satisfy }  (\ref{eq:thm1-constr1})\}.
\end{equation*}
Next, we demonstrate that
\begin{equation} \label{eq:eqaulity-max-max,max}
\max_{\gamma\in (0,1), \, g\in \mathscr{G}} \, \{H(\gamma,g)  \,|\,  \gamma, g \text{ satisfy }  (\ref{eq:thm1-constr1})\} = \max_{\gamma\in (0,1)}\max_{g\in \mathscr{G}\mid \gamma} \ H(\gamma,g) .
\end{equation}
By defining
\begin{equation} \label{def:little-h-gamma}
h(\gamma):=\left\{ \begin{array}{ll}
           \max_{g\in \mathscr{G}\mid \gamma} H(\gamma, g) & \mbox{if $\exists g \text{ that satisfies }  (\ref{eq:thm1-constr1}) \text{ for } \gamma $};\\
        -\infty  & \mbox{if $\not \exists g \text{ that  satisfies }  (\ref{eq:thm1-constr1}) \text{ for } \gamma$},\end{array} \right.
\end{equation}
the equality  (\ref{eq:eqaulity-max-max,max}) is equivalent to 
\begin{equation*} %\label{eq:eqaulity-max-max,max-hgamma}
\max_{\gamma\in (0,1), \, g\in \mathscr{G}} \, \{H(\gamma,g)  \,|\,  \gamma, g \text{ satisfy }  (\ref{eq:thm1-constr1})\} = \max_{\gamma\in (0,1)} \, h(\gamma).
\end{equation*}
To prove the equality above, we first fix some $ \gamma \in (0,1)$ and let $ g$ be a maximizer of $h( \gamma)$.  Then, let  $\bar \gamma\in \Argmax \, h( \gamma)$ over $ \gamma \in (0,1)$ and $\bar g\in \Argmax \, H(\bar \gamma,g)$ such that  $g$ satisfies  (\ref{eq:thm1-constr1}) for $\bar \gamma$.  Notice that $h(\bar \gamma)\ne -\infty,$ by the assumption on the existence of $p\in\mathscr P$ satisfying (\ref{eq:thm1-constr}). Clearly,  the  pair $(\bar \gamma, \bar g)$ satisfies (\ref{eq:thm1-constr1}) and $ \max_{\gamma\in (0,1), \, g\in \mathscr{G}\mid \gamma} \, \{H(\gamma,g)  \,|\,  \gamma, g \text{ satisfy }  (\ref{eq:thm1-constr1})\}\ge H(\bar \gamma,\bar g)$. This shows that $\max_{\gamma\in (0,1), \, g\in \mathscr{G}\mid \gamma} \, \{H(\gamma,g)  \,|\,  \gamma, g \text{ satisfy }  (\ref{eq:thm1-constr1})\}\ge \max_{\gamma\in (0,1)} \, h(\gamma).$
Conversely, suppose that $\gamma \in (0,1)$ and $(\gamma,g)$ satisfies the constraints (\ref{eq:thm1-constr1}), then by definition, we have $h(\gamma)\ge H(\gamma, g)$. Hence, $\max_{\gamma\in (0,1), \, g\in \mathscr{G}\mid \gamma} \, \{H(\gamma,g)  \,|\,  \gamma, g \text{ satisfy }  (\ref{eq:thm1-constr1})\}\le \max_{\gamma\in (0,1)} \, h(\gamma).$

By taking advantage of the proved equality (\ref{eq:eqaulity-max-max,max}), the problem of maximization of the entropy in \eqref{eq:maxent-semicts-entropy} can be broken down as 
	\begin{equation}
	\label{eq:thm1-maxent}
	\begin{aligned}
 \max_{p\in \mathscr{P}} \, \{H(p) \,|\, p \text{ satisfies } (\ref{eq:thm1-constr})\}  &= 
 \max_{\gamma\in (0,1), \, g\in \mathscr{G}} \, \{H(\gamma,g)  \,|\,  \gamma, g \text{ satisfy }  (\ref{eq:thm1-constr1})\} 
 	\\&= 
\max_{\gamma\in (0,1)}\max_{g\in \mathscr{G}\mid \gamma} \, H(\gamma,g)  
	\\&= 
\max_{\gamma\in (0,1)}\big\{-\gamma\log\gamma - (1-\gamma)\log(1-\gamma)+(1-\gamma)\max_{g\in \mathscr{G}\mid \gamma} H(g)\big\}.
\end{aligned}
\end{equation}
\end{proof}

\begin{comment}
\begin{assumption}
For each $\gamma\in (0,1)$, the corresponding  set $\mathscr{G}\mid \gamma$ defined in (\ref{def:g-mid-gamma}) is nonempty.
\end{assumption}
\end{comment}

Under the assumption that for each $\gamma\in (0,1)$, the corresponding  set $\mathscr{G}\mid \gamma$ defined in (\ref{def:g-mid-gamma}) is nonempty, for each $\gamma \in (0,1)$, the set $\mathscr{G}\mid \gamma$ is a  nonempty convex feasible set so that the concave maximization problem $\max_{g\in \mathscr{G}\mid \gamma} H(g)$ has a unique solution due to  \citet{Cover2006}; in the sense that the optimums do not differ except from a measure zero set.  Theorem 12.1.1 in \citet{Cover2006} establishes that the unique   optimum $g \mid \gamma= \Argmax_{g \in \mathscr{G} \mid \gamma}H(g)=\exp\Big(\lambda_0(\gamma)-1+ \sum_{j=1}^K \lambda_j(\gamma) h_j(y)\Big)$ such that $\lambda_j(\gamma)$'s are chosen so that $g\mid \gamma \in \mathscr{G} \mid \gamma$. Thus, if each $\lambda_j(\gamma)$ for $j=1,\dots, K$ is differentiable, the optimum  $g\mid \gamma$ is differentiable with respect to $\gamma$, and consequently, the concave  $h(g\mid \gamma)=-\int (g\mid \gamma)\log (g\mid \gamma)$,  for each $\gamma \in (0,1)$, defined in is well-defined and differentiable with respect to $\gamma$. 
\newline
Thus, under such assumptions and by Proposition~\ref{thm:maxent-obj-decomposition}, we have
\begin{equation*}
 \max_{p\in \mathscr{P}} \, \{H(p) \,|\, p \text{ satisfies } (\ref{eq:thm1-constr})\} =\max_{\gamma\in (0,1)}\big\{-\gamma\log\gamma - (1-\gamma)\log(1-\gamma)+(1-\gamma)h(\gamma)\big\}.
\end{equation*}
This further implies that $\gamma^*$ optimizing the right-hand side of an equality above must satisfy the 
\textit{implicit} equation as follows:
\begin{equation}\label{eq:gamma-star}
\log\Big(\frac{1-\gamma}{\gamma}\Big) -h(\gamma) -(\gamma  1)h'(\gamma)=0.
\end{equation}
Since the closed-form representations of $h(\gamma)$ and its derivative are not available in practice, motivated by the equation (\ref{eq:gamma-star}), we propose a two-step iterative procedure.  The Algorithm~\ref{alg:maxent-semicts-algo} aims to determine the MaxEnt distribution for semi-continuous data alternatively in which
\begin{equation}
\label{eq:backward-diff}
  h'(\gamma^{(k)})=\frac{h(\gamma_k)-h(\gamma_{k-1})}{\gamma_k-\gamma_{k-1}},  
\end{equation}
based on the backward difference  idea for numerically approximating derivatives. Thus, once $h(\gamma_k)$ is in hand, we numerically solve the following equation to update the $\gamma_{k+1}$: 
\begin{equation} \label{eq:thm1-maxgamma}
\log\Big(\frac{1-\gamma_{k+1}}{\gamma_{k+1}}\Big) -h(\gamma_k) -(\gamma_{k+1}-1)\frac{h(\gamma_k)-h(\gamma_{k-1})}{\gamma_k-\gamma_{k-1}}=0.
\end{equation}
\begin{algorithm}[H]
	\caption{Alternating Entropy Maximization for Semi-continuous Data (AEM)}
	\label{alg:maxent-semicts-algo}
	\begin{algorithmic}
		\State Step 1a: Choose small positive real values as tolerance error $\epsilon$, and a maximum number of iterations $M$.
		\State Step 1c: Choose initial values $\gamma^{(0)}$ and $\gamma^{(-1)}$ for $\gamma$ and calculate $H(g^{*(0)})$ and $H(g^{*(-1)})$, respectively, where $g^{*}$ is the MaxEnt distribution with constraints in \eqref{eq:thm1-constr1} with $\gamma^{(0)}$ (or $\gamma^{(-1)}$) substituted for $\gamma$. Set $e$ to $H(p^{*(0)})$. Set $k=1$.
		\State Step 2: 
		\While{$k \leq M$ OR $e > \epsilon$}.
		\State Calculate $h'(\gamma^{(k)})$ using \eqref{eq:backward-diff}.
		\State Set $\gamma^{(k)}$ using the equality in \eqref{eq:thm1-maxgamma} and compute $H(g^{*(k)})$ and $H(p^{*(k)})$.
		\State Set $e$ to $\mid H(p^{*(k)}) - H(p^{*(k-1)})\mid$.
		\EndWhile\label{euclidendwhile}
		\State $p(y) = \gamma^{(k)}\delta(y) + (1-\gamma^{(k)})\delta^*(y)g^{*(k)}(y)$ is the desired MaxEnt distribution.
	\end{algorithmic}
\end{algorithm}

Estimation of density functions for mixture distributions, which are a mixture of discrete and continuous distributions, using entropy maximization was studied by \citet{Politis1994}. By applying their method to semi-continuous data, it can be shown that $\gamma^*$ in the MaxEnt density function in \eqref{eq:maxent-semicts-density1} is given by 
\begin{equation}
    \label{eq:politis-maxent-gamma-soln}
    \gamma^* = \frac{1}{1+\text{exp}(H(g^*))},
\end{equation}
where $g^*$ is the MaxEnt distribution for the continuous part and $H(g^*)$ is the corresponding maximized entropy value. The trouble with applying this method is that $g^*$ is not known and must be estimated. However, notice that if $\gamma^*$ is known, then $g^*$ can be found by maximizing the entropy subject to the constraints in \eqref{eq:thm1-constr-g^*}. This suggests an alternating maximization algorithm similar to Algorithm~\ref{alg:maxent-semicts-algo} given by 
\begin{algorithm}[H]
	\caption{Alternating Entropy Maximization for Semi-continuous data based on \citet{Politis1994} (AEM-Politis)}
	\label{alg:maxent-semicts-politis-algo}
	\begin{algorithmic}
		\State Step 1a: Choose small positive real values as tolerance error $\epsilon$, and a maximum number of iterations $M$.
		\State Step 1c: Choose an initial value $\gamma^{(0)}$ for $\gamma$ and calculate $H(g^{*(0)})$, where $g^{*(0)}$ is the MaxEnt distribution with constraints in \eqref{eq:thm1-constr1} with $\gamma^{(0)}$ substituted for $\gamma$. Set $e$ to $H(p^{*(0)})$, respectively. Set $k=1$.
		\State Step 2: 
		\While{$k \leq M$ OR $e > \epsilon$}.
		\State Determine $H(g^{*(k)})$ by setting $\gamma$ to $\gamma^{(k-1)}$.
		\State Set $\gamma^{(k)}$ using the equality in \eqref{eq:politis-maxent-gamma-soln}.
		\State Set $e$ to $\mid H(p^{*(k)}) - H(p^{*(k-1)})\mid$.
		\EndWhile\label{euclidendwhile}
		\State $p(y) = \gamma^{(k)}\delta(y) + (1-\gamma^{(k)})\delta^*(y)g^{*(k)}(y)$ is the desired MaxEnt distribution.
	\end{algorithmic}
\end{algorithm}

Hydrological studies on daily rainfall data typically focus only on the positive part of the semi-continuous distribution (\citet{RHO2019210}) and set the probability of observing a zero to the proportion of zeros in the data.   
For example, \citet{Papal2012} have focused on the MaxEnt solution of the continuous part alone and explored various distributions including a generalized gamma distribution, generalized beta distribution etc with applications for the daily rainfall data. More recently, \citet{PopuriZois2019} too set the probability of observing a zero to the sample value and apply the entropy maximization method to the blind source separation problem for semi-continuous data. Notice that while fixing $\gamma^*$ to sample values will simplify the problem, the resulting solution is not the MaxEnt distribution. Furthermore, the real application of the entropy maximization approach is when only the sample values of the constraint functions are known and not the entire data samples to estimate the distribution. As far as we know, there is no method in the literature that can estimate the density function for semi-continuous data using values of the constraint functions alone. As far as we know, Algorithm~\ref{alg:maxent-semicts-algo} is the first such method.

%Here we do not assume that the MaxEnt distributions for the discrete and continuous parts are known.

\section{Numerical Examples}
\label{sec:maxent-sim}
We evaluate Algorithm~\ref{alg:maxent-semicts-algo} using three examples. The first two use simulated data and in the last example, we apply the methods in section~\ref{sec:method} to daily rainfall data in a region in the Midwest US. In these examples, we refer to Algorithm~\ref{alg:maxent-semicts-algo} as the AES method, Algorithm~\ref{alg:maxent-semicts-politis-algo} as the AES-Politis method, and the method where the probability of observing a zero is set to the proportion of zeros in the sample as the `Two-part EM' method.

\begin{example}\label{ex:maxent-exp-sim}(Semi-continuous exponential distribution)
In this example, we evaluate our method in Algorithm \ref{alg:maxent-semicts-algo} on various sets of simulated data from the two-part exponential distribution given by 
\begin{equation}
    \label{eq:ex1-tpexp-distr}
    p(y) = \gamma^*\delta(y) + (1-\gamma^*)\delta^*(y)\lambda\exp(-\lambda y),
\end{equation}
where $\lambda > 0$ is the rate parameter. As shown in Example~\ref{ex:maxent-exp}, if the only constraint is on the first moment, then the MaxEnt solution has a closed form and it belongs to the class of semi-continuous exponential distributions. Therefore, we could exactly compute the entropy value and other properties of the true solution. This helps us to evaluate how well our algorithm is able to recover the entropy value of the true MaxEnt solution. We simulate datasets for a set of choices for parameters $(\gamma, \lambda)$ and calculate the percentage deviation of the entropy and the variance of the estimated MaxEnt solution from the corresponding true values. Specifically, for each choice of $(\gamma, \lambda)$, we simulate data of size $1000$ from \eqref{eq:ex1-tpexp-distr} and estimate the MaxEnt distribution using the three methods: AES, AES-Politis, and Two-part EM. We replicate the simulation $100$ times and compute the percentage deviation values, averaged over the replications. We also compute the convergence path averaged over $100$ replications to study the convergence of our algorithm. Table~\ref{tab:table1} shows the percentage deviations in the estimates of the true $H(p)$ and variance values for the three models for a selected choice of parameter values. Clearly, estimates by AEM are significantly closer to the true values compared to those from the AEM-Politis and the Two-part EM methods. Figure~\ref{fig:exp-conv} shows the convergence behavior of the three methods. These plots show that the AEM method converges fast to the true entropy value. These results show that the AEM method estimates the MaxEnt distribution with the least bias among the three methods.
\begin{table}[H]
\begin{tabular}{SSSSSSSS}
%\toprule
& & \multicolumn{3}{c}{H(p)} & \multicolumn{3}{c}{Variance} \\
\cmidrule(r){3-5}\cmidrule(l){6-8}
$\gamma$ & $\lambda$ & {AEM} & {AEM-Politis} & {Two-part EM}  & {AEM} & {AEM-Politis} & {Two-part EM} \\
\midrule
0.1 & 0.5 & 0.0 & -3.2 & -6.2 & 4.5 & -24.6 & -31.6 \\
0.1 & 1.5 & -0.2 & -6.8 & -33.8 & -1.1 & -33.5 & -56.4 \\
0.4 & 0.5 & -0.1 & -4.5 & -0.5 & -0.3 & -28.5 & 11.6 \\
0.4 & 1.5 & -0.1 & -7.9 & -5.2 & -0.2 & -34.9 & -29.5 \\
0.8 & 0.5 & -0.3 & -8.1 & -22.8 & 0.4 & -34.7 & 171.4 \\
0.8 & 1.5 & 0.0 & -11.1 & -7.4 & 1.5 & -37.1 & 64.1 \\
\bottomrule
\end{tabular}
\caption{\footnotesize Percentage deviation in H(p) and Variance of the estimated MaxEnt distribution $p$ for various parameter choices.}
\label{tab:table1}
\end{table}
\begin{figure}[H]
\begin{tabular}{cc}
  \includegraphics[width=84mm]{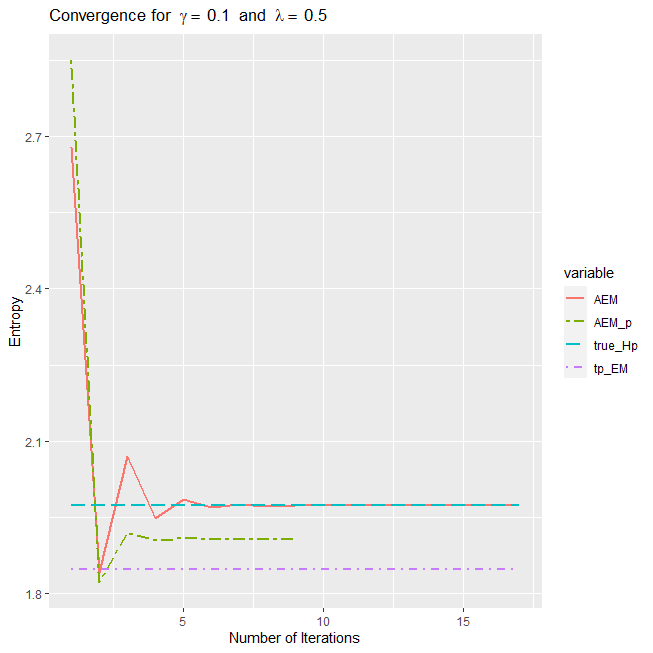} &   \includegraphics[width=84mm]{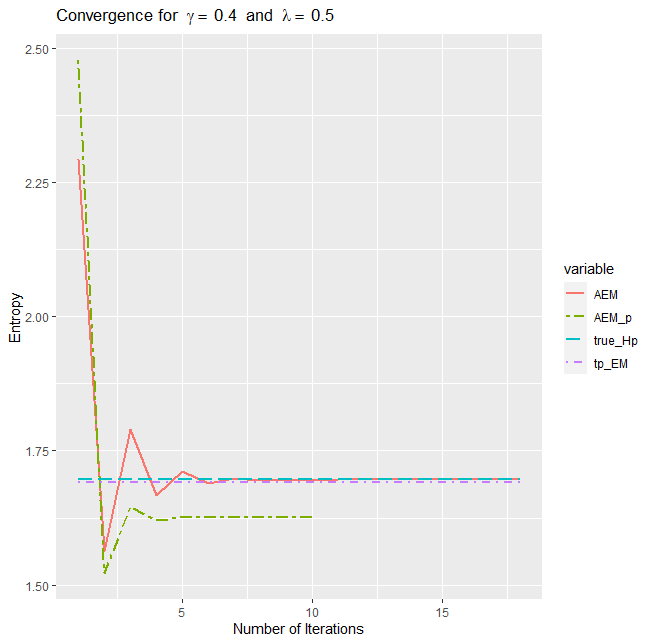} \\
%(a)  & (b) second \\[6pt]
 \includegraphics[width=84mm]{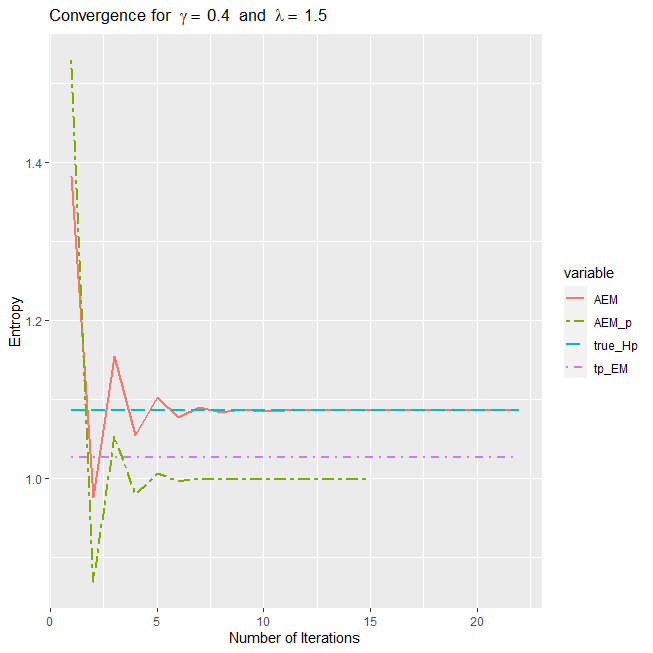} &   \includegraphics[width=84mm]{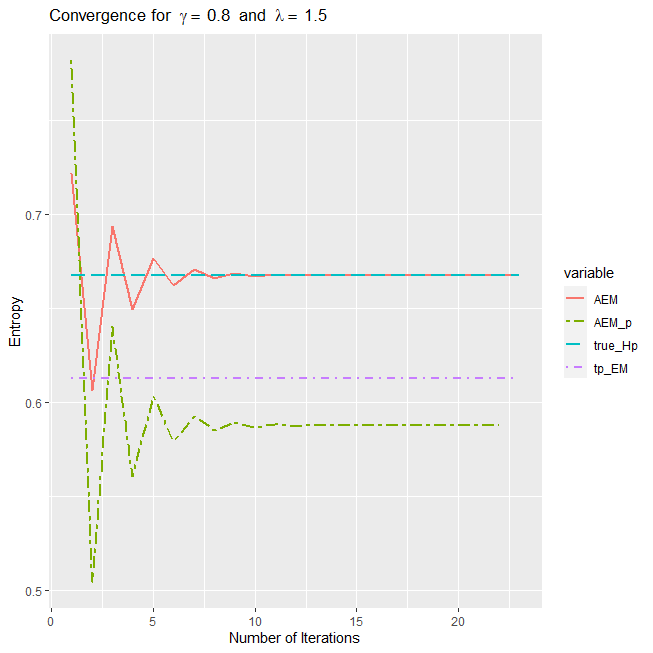} \\
%(c) third & (d) fourth 
\end{tabular}
\caption{Convergence profiles for AEM, AEM-Politis, and Two-part EM methods}
\label{fig:exp-conv}
\end{figure}
\end{example}

\begin{example}\label{ex:maxent-gamma-sim}(Semi-continuous gamma distribution)
In this example we estimate the density function using the three methods: AEM, AEM-Politis, and Two-part EM on datasets simulated from the two-part gamma distribution given by
\begin{equation}
    \label{eq:ex1-tpgamma-distr}
    p(y) = \gamma^*\delta(y) + (1-\gamma^*)\delta^*(y)\frac{1}{\Gamma(\kappa)\theta^\kappa}y^{\kappa-1}e^{-y/\theta}
\end{equation}
for a set of fixed parameters $\gamma$, $\kappa$, and $\theta$. Unlike the semi-continuous exponential distribution in  Example~\ref{ex:maxent-exp-sim}, there is no closed form MaxEnt solution here. Therefore, it is not possible to study the performance of the three algorithms relative to the true entropy value. Instead, we report the maximum entropy values attained by each algorithm. Table~\ref{tab:table2} shows the entropy of the MaxEnt solutions computed by the three algorithms. The entropy values shown are averaged over $100$ replications. For each replication, a sample of size $1000$ was generated from the distribution in \eqref{eq:ex1-tpgamma-distr}. The results show that the solution computed by the AEM algorithm has the maximum entropy value compared to the two other algorithms for the selected parameter values. Our implementations of AEM and AEM-Politis encountered numerical problems while solving the nonlinear equations in Example~\ref{ex:maxent-gamma} for larger values of $\gamma$ and therefore, we do not report more results but we expect the behavior of the estimated entropy values to be similar. 
%\begin{comment}
\begin{table}[H]
\centering
\begin{tabular}{SSSSSS}
$\gamma$ & $\kappa$ & $\theta$ & {AEM} & {AEM-Politis} & {Two-part EM} \\
\midrule
0.1 & 1.0 & 0.5 & 0.89 & 0.82 & 0.65 \\
0.1 & 3.0 & 0.5 & 1.76 & 1.62 & 1.36 \\
0.1 & 3.0 & 1.0 & 2.08 & 2.01 & 1.98 \\
0.1 & 5.0 & 0.5 & 1.85 & 1.76 & 1.63 \\
0.4 & 1.0 & 1.5 & 1.51 & 1.38 & 1.50 \\
0.4 & 5.0 & 1.0 & 2.17 & 2.13 & 1.96 \\
0.4 & 5.0 & 1.5 & 2.63 & 2.59 & 2.20 \\
\bottomrule
\end{tabular}
\caption{\footnotesize Entropy (averaged over replications) of the estimated MaxEnt distribution $p$ for various parameter choices.}
\label{tab:table2}
\end{table}
%\end{comment}
\end{example}

\begin{example}\label{ex:pr-example}(Daily Precipitation data)\newline
In this example, we use the observed daily rainfall data from about $500$ locations in a region in the Missouri river basin (MRB), one of the largest river basins in the US. Figure~\ref{fig:prmrb} shows the study region, which spreads across several states in the Midwest US. The daily rainfall data shown in Figure~\ref{fig:obshist} is from a particular location in that region. Since the true data generating processes are not known, we do not know the true entropy value. In hydrological studies, daily precipitation data is typically analyzed using a two-part gamma distribution (\citet{Papal2012}). Therefore, we use constraints on the mean and logarithm of the positive data in the entropy maximization problem and apply the AEM method and compare it with the AEM-Politis and Two-part EM methods. We selected around $500$ locations in the study region where the proportion of zeros in the observed data is less than $0.6$, in order to avoid numerical problems. At each of these locations, the three algorithms are run to obtain three entropy values. We take the percentage difference between the entropy value from the AEM method and the maximum of entropy values from the AEM-Politis and the Two-part EM methods. In our analysis, entropy values estimated by the AEM method were greater than the maximum of entropy values estimated by the AEM-Politis and Two-part EM methods by $1\%$ to $10\%$ with at around $125$ out of $500$ of the locations the difference was greater than $5\%$. This clearly shows that our method was able to maximize the entropy for the two-part gamma distribution consistently better than the rest of the methods when applied to the observed daily rainfall datasets.
\begin{figure}[H]
 	\centering
 	\includegraphics[width=0.6\textwidth]{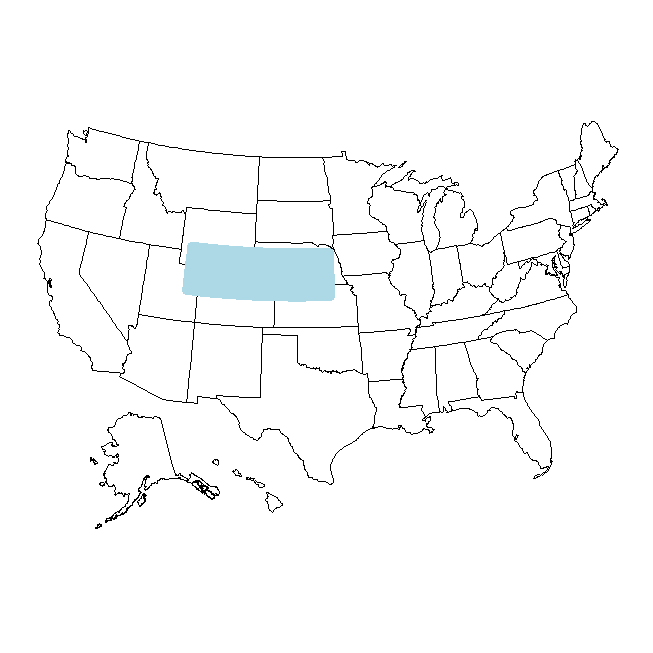}
 	\caption{Region of the test data}
 	\label{fig:prmrb}
 \end{figure}
\end{example}

\section{Conclusion}
\label{sec:discussion}
In this paper, we have presented a new algorithm to estimate the density function for semi-continuous data using entropy maximization. The novelty in our method lies in the fact that it is applied to the entire distribution of semi-continuous data without assuming that the probability of observed a zero is known. The algorithm only needs the values of the constraint functions from the observed data and not the entire data, unlike current methods in the literature. We have also provided a theoretical justification for our algorithm and presented a comprehensive development of entropy maximization theory for semi-continuous data. Using simulations, we have demonstrated that our method converges to the true entropy value and the existing methods in literature produce biased solutions. A useful application of our algorithm is in approximating posterior predictive distributions, for example, in Bayesian weather generators that need predictive distributions as input for climate variables instead of point estimates. We would like to explore this practical application in future. We would also like to conduct a comprehensive assessment of how well the distributions estimated by our method model the observed daily rainfall patterns.

\section*{Acknowledgments}
The hardware used in the computational studies is part of the
UMBC High Performance Computing Facility (HPCF).
The facility is supported by the U.S. National Science Foundation
through the MRI program
(grant nos. CNS-0821258, CNS-1228778, and OAC-1726023)
and the SCREMS program (grant no. DMS-0821311),
with additional substantial support from the
University of Maryland, Baltimore County (UMBC).
See hpcf.umbc.edu
for more information on HPCF and the projects using its resources.

\bibliographystyle{plainnat}
\bibliography{maxentreferences}
\end{document}